\definecolor{webgreen}{rgb}{0,.5,0}
\definecolor{webbrown}{rgb}{.6,0,0}
\title{New results on pseudosquare avoidance}
\author{Tim Ng\\
School of Computer Science\\
University of Waterloo\\
Waterloo, ON N2L 3G1 \\
Canada\\
\href{mailto:tim.ng@uwaterloo.ca}{\tt tim.ng@uwaterloo.ca} \\
\and
Pascal Ochem\\
LIRMM, CNRS \\
Universit\'e de Montpellier \\
France \\
\href{mailto:ochem@lirmm.fr}{\tt ochem@lirmm.fr} \\
\and
Narad Rampersad\\
Department of Math/Stats\\
University of Winnipeg\\
515 Portage Ave.\\
Winnipeg, MB R3B 2E9\\
Canada \\
\href{mailto:narad.rampersad@gmail.com}{\tt narad.rampersad@gmail.com} \\
\and
Jeffrey Shallit\\
School of Computer Science\\
University of Waterloo\\
Waterloo, ON N2L 3G1 \\
Canada\\
\href{mailto:shallit@uwaterloo.ca}{\tt shallit@uwaterloo.ca} 
}
\begin{document}

\theoremstyle{plain}
\newtheorem{theorem}{Theorem}
\newtheorem{corollary}[theorem]{Corollary}
\newtheorem{lemma}[theorem]{Lemma}
\newtheorem{proposition}[theorem]{Proposition}

\theoremstyle{definition}
\newtheorem{definition}[theorem]{Definition}
\newtheorem{example}[theorem]{Example}
\newtheorem{conjecture}[theorem]{Conjecture}

\theoremstyle{remark}
\newtheorem{remark}[theorem]{Remark}

\maketitle

\begin{abstract}
We start by considering binary words containing the minimum possible numbers of squares and antisquares (where an antisquare is a word of the form $x \overline{x}$), and we completely classify which possibilities can occur.  We consider avoiding $x p(x)$, where $p$ is any permutation of the underlying alphabet, and $x t(x)$, where $t$ is any transformation of the underlying alphabet.    Finally, we prove the existence of an infinite binary word simultaneously avoiding all occurrences of $x h(x)$ for {\it every\/} nonerasing morphism $h$ and all 
sufficiently large words $x$.   
\end{abstract}

\section{Introduction}

Let $x, v$ be words.  We say that $v$ is a {\it factor\/} of $x$ if there exist words $u, w$ such that $x = uvw$.  For example,
{\tt or} is a factor of {\tt word}.

By a {\it square\/} we mean a nonempty word of the form
$xx$, like the French word {\tt couscous}.
The {\it order\/} of a square $xx$ is $|x|$, the length of $x$.
It is easy to see that every binary word of length at least $4$
contains a square factor.  However, in a classic paper from 
combinatorics on words,
Entringer, Jackson, and Schatz \cite{Entringer&Jackson&Schatz:1974}
constructed an infinite binary word containing, as factors,
only $5$ distinct squares:  $0^2$, $1^2$, $(01)^2$,
$(10)^2$, and $(11)^2$.
This bound of $5$ squares was improved to $3$ by Fraenkel and Simpson \cite{Fraenkel&Simpson:1995}; it is optimal.
For some other constructions also achieving the bound $3$, see
\cite{Rampersad&Shallit&Wang:2005,Ochem:2006,Harju&Nowotka:2006,Badkobeh&Crochemore:2012}.

Instead of considering squares, one
could consider {\it antisquares}:  these are binary words of the form $x \overline{x}$,
where $\overline{x}$ is a coding that maps $0 \rightarrow 1$ and $1 \rightarrow 0$.  For example, {\tt 01101001} is an antisquare. (They should not be confused with the different notion of antipower recently introduced by
Fici, Restivo, Silva, and Zamboni \cite{Fici&Restivo&Silva&Zamboni:2018}.)
Clearly it is possible to construct an infinite binary word that avoids all antisquares, but only in a trivial way:  the only such words are $0^\omega = 000\cdots$ and $1^\omega = 111\cdots$.
Similarly, the only infinite binary words with exactly one antisquare are
$01^\omega$ and $10^\omega$.  However,
it is easy to see that
every word in 
$\{  1000, 10000 \}^\omega$ has exactly
two antisquares --- namely $01$ and $10$ --- and hence there are infinitely many such words that are aperiodic. 

Several writers have considered variations on these results.
For example, Blanchet-Sadri, Choi, and Merca\c{s} \cite{Blanchet-Sadri&Choi&Mercas:2011} considered avoiding large squares in partial words.  Chiniforooshan, Kari, and Zhu \cite{Chiniforooshan&Kari&Xu:2012} studied avoiding words of the form $x \theta(x)$, where $\theta$ is an antimorphic involution.
Their results implicitly suggest the general problem of simultaneously avoiding what we might call {\it pseudosquares}:  patterns of the form $x x'$, where $x'$ belongs to some (possibly infinite) class of modifications of $x$.

This paper has two goals.  First, for all integers $a, b \geq 0$ we determine
whether there is an infinite binary word having at most $a$ squares and $b$ antisquares.  If this is not possible, we determine the length of the longest finite binary word with this property.

Second, we apply our results to discuss the simultaneous avoidance of $x x'$, where $x'$ belongs to some class of modifications of $x$.  We consider three cases:  
\begin{itemize}
    \item[(a)] where $x' = p(x)$ for a permutation $p$ of the underlying alphabet; 
    \item[(b)] where $x' = t(x)$ for a transformation $t$ of the underlying alphabet; and 
    \item[(c)] where $x' = h(x)$ for an arbitrary nonerasing morphism.
\end{itemize}
In particular, we prove the existence of an infinite binary word that avoids $x h(x)$ simultaneously for all nonerasing morphisms $h$ and all sufficiently long words $x$.

\section{Simultaneous avoidance of squares and antisquares}

We are interested in binary words where the number of distinct factors that are squares and antisquares is bounded.   More specifically, we completely solve this problem determining in every case the length of the longest word having at most $a$ distinct squares and at most $b$ distinct antisquares.
Our results are summarized in the following table.  If (one-sided) infinite words are possible, this is denoted by writing $\infty$ for the length.

\begin{figure}[H]
\begin{center}
    \begin{tabular}{|c|ccccccccccccccc|}
    \hline
    \diagbox{$a$}{$b$} & 0 & 1 & 2 & 3 & 4 & 5 & 6 & 7 & 8 & 9 & 10 & 11 & 12 & 13  & $\cdots$\\
    \hline
    0 & 1 & 2 & 3 & 3 & 3 & 3 & 3 & 3 & 3 & 3 & 3 & 3 & 3 & 3 & $\cdots$ \\
    1 & 3 & 4 & 7 & 7 & 7 & 7 & 7 & 7 & 7 & 7 & 7 & 7 & 7 & 7 & $\cdots$\\
    2 & 5 & 6 & 11 & 11 & 11 & 11 & 12 & 12 & 12 & 13 & 15 & 18 & 18 & 18 & $\cdots$ \\
    3 & 7 & 8 & 15 & 15 & 15 & 20 & 20 & 20 & 24 & 29 & 34 & 53 & 98 & $\infty$ & $\cdots$ \\
    4 & 9 & 10 & 19 & 19 & 27 & 31 & 45 & 56 & 233 & $\infty$ & $\infty$ & $\infty$ & $\infty$ & $\cdots$ & \\
    5 & 11 & 12 & 27 & 27 & 40 & $\infty$ & $\infty$ & $\infty$ & $\infty$ & $\cdots$ & & & & &\\
    6 & 13 & 14 & 35 & 38 & 313 & $\infty$ & $\cdots$ &&&&&&&&\\
    7 & 15 & 16 & 45 & $\infty$ & $\infty$ & $\cdots $ &&&&&&&&&\\
    8 & 17 & 18 & 147 & $\infty$ & $\cdots $ &&&&&&&&&& \\
    9 & 19 & 20 & $\infty$ & $\cdots$ &&&&&&&&&&&\\
    10 & 21 & 22 & $\infty$ & $\cdots$ &&&&&&&&&&&\\
    $\vdots $ & & & & & & & & & & & & & & & \\
    \hline
    \end{tabular}
    
\end{center}
\caption{Length of longest binary word having at most $a$ squares and $b$ antisquares}
\end{figure}

The results in the first two columns and first three rows (that is, for $a \leq 2$ and $b \leq 1$) are very easy.
We first explain the first two columns:
\begin{proposition}
\leavevmode
\begin{itemize}
    \item[(a)]  For $a \geq 0$, the longest binary word with $a$ squares and $0$ antisquares has length $2a+1$.
    \item[(b)]  For $a \geq 0$, the longest binary word with $a$ squares and $1$ antisquare has length $2a+2$.
\end{itemize}
\end{proposition}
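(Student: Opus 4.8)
My plan is to prove both parts by first showing that the antisquare constraint forces the word into a rigid ``monotone'' shape, and then merely counting the squares such a shape can contain.

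First I would establish the structural consequence of the antisquare bound. The two antisquares of order $1$ are exactly $01$ and $10$, so a binary word contains $0$ antisquares if and only if it contains neither $01$ nor $10$ as a factor, equivalently it is a power of a single letter, $0^n$ or $1^n$. For part (b) the key point is that $01$ and $10$ are \emph{distinct} antisquares, so a word with at most one antisquare can contain at most one of them. Hence such a word never alternates between letters: it must have the form $0^i 1^j$ or $1^i 0^j$, and by symmetry (complement and reversal) it suffices to treat $0^i 1^j$.

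Next I would count the square and antisquare factors of a monotone word $0^i 1^j$. Since this word has a single descent, every factor is a block $0^a$, a block $1^b$, or a straddling factor $0^a 1^b$ with $a,b\ge 1$. A short case analysis on where the midpoint of a putative square or antisquare lands shows that no straddling factor can be a square $yy$ (a $y$ containing both letters would force a second descent), and that the only straddling antisquares are the balanced blocks $0^m 1^m$. Consequently $0^i 1^j$ contains exactly $\lfloor i/2\rfloor + \lfloor j/2\rfloor$ distinct squares and exactly $\min(i,j)$ distinct antisquares.

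With these counts the optimization is routine. For part (a) the $0$-antisquare condition gives $0^n$ with $\lfloor n/2\rfloor \le a$, so $n \le 2a+1$, attained by $0^{2a+1}$. For part (b) the at-most-one-antisquare condition means $\min(i,j)\le 1$; the case $\min = 0$ reduces to part (a) and is shorter, while $\min = 1$ forces one exponent to equal $1$, say $0^i 1$, with $\lfloor i/2\rfloor \le a$, giving $i\le 2a+1$ and length $i+1\le 2a+2$. The word $0^{2a+1}1$ then achieves length $2a+2$ with exactly $a$ squares and the single antisquare $01$. I expect the only genuine content to be the straddling-factor analysis, namely verifying that a word with a single letter-descent hides no unexpected squares and only the balanced antisquares; everything else is bookkeeping once the monotone structure has been forced.
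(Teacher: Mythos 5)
Your proposal is correct and follows essentially the same route as the paper: derive from the antisquare bound that the word is a power of a single letter (part (a)) or of the monotone form $0^i 1^j$ up to symmetry (part (b)), then count the squares $0^{2s}$, $1^{2t}$ and the antisquares $0^m 1^m$ to optimize the length. Your straddling-factor analysis simply makes explicit what the paper's terser proof treats as clear, namely that the unique antisquare must be $01$ or $10$ and that $\min(i,j)\le 1$, forcing the extremal form $0^{n-1}1$ or $01^{n-1}$.
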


\begin{proof}
\leavevmode
\begin{itemize}
    \item[(a)]  If a binary word has no antisquares, then in particular it has no occurrences of either $01$ or $10$.
    Thus it must contain only one type of letter.  If it has length $2a+2$, then it has $a+1$ squares, of order $1,2,\ldots,a+1$.  If it has length $2a+1$, it has $a$ squares.  So $2a+1$ is optimal.
    
    \item[(b)]  If a length-$n$ binary word $w$ has only one antisquare, this antisquare must be either $01$ or $10$; without loss of generality, assume it is $01$.  Then $w$ is either of the form $0^{n-1} 1$ or $0 1^{n-1}$.  Such a word clearly has $\lfloor (n-1)/2 \rfloor$ squares.
    \end{itemize}
\end{proof}

We next explain the first three rows:  if a binary word has no squares, its length is clearly bounded by $3$, as we remarked earlier.  If it has one square, a simple argument shows it has length at most $7$. Finally, if it has two squares, already Entringer, Jackson, and Schatz \cite[Thm.~2]{Entringer&Jackson&Schatz:1974} observed that it has length at most $18$.

For all the remaining finite entries, we obtained the result through the usual backtrack search method, and we omit the details.

It now remains to prove the results labeled $\infty$.  First, we introduce some morphisms.  Let the morphisms
$h_{3,13}$,
$h'_{3,13}$,
$h_{4,9}$,
$h_{5,5}$,
$h_{7,3}$,
$h'_{7,3}$
be defined as follows:

\begin{itemize}
    \item[(a)]  $h_{3,13}: \\
    0 \rightarrow$ {\tiny\rm 001011001110001100101110001011000111001011001110001100101100011100101110001011000111001011001110001100101110001011}$\\
    ${\tiny\rm 001110001100101100011100101110001011001110001100101110001011000111001011100010110011100011001011000111}\\
$ 1\rightarrow$ {\tiny\rm 00101100111000110010111000101100011100101100111000110010110001110010111000101100011100101100111000110010111000101100}$\\
${\tiny\rm 0111001011100010110011100011001011100010110001110010110011100011001011100010110011100011001011000111} \\
$2 \rightarrow$ {\tiny\rm 
00101100111000110010111000101100011100101100111000110010110001110010111000101100011100101100111000110010111000101100}$\\
${\tiny\rm 0111001011100010110011100011001011000111001011001110001100101110001011001110001100101100011100101110}\\
This is a $216$-uniform morphism.

\item[(b)]
$h'_{3,13}: \\
0 \rightarrow 0010110011100011$ \\
$1 \rightarrow 001011000111$ \\
$2 \rightarrow 00101110$

\item[(c)] 
$h_{4,9}: \\
0 \rightarrow 0000101110000011000010110000011000101100001011100010110$\\
$1 \rightarrow 0000101110000011000010110000011000101100000101110001011$\\
$2 \rightarrow 0000101110000011000010110000010111000101100000110001011$\\
This is a $55$-uniform morphism.

\item[(d)] 
$ h_{5,5}: \\ 0 \rightarrow
101000001011000010100001101011000001$\\
$ 1\rightarrow 101000001011000001101011000010100001$\\
$2 \rightarrow 101000001010000110000010100000110000$\\
This is a $36$-uniform morphism.

\item[(e)] 
$h_{7,3}: \\  0 \rightarrow 00101000010010010100000101001$\\
$1 \rightarrow 00101000010010010000101001000$\\
$2 \rightarrow 00101000010010010000101000001$\\
This is a $29$-uniform morphism.

\item[(f)]
$h'_{7,3}:\\ 0 \rightarrow 0100100100001010000$ \\
$1 \rightarrow 01001001000001$ \\
$2 \rightarrow 0100100101000$\\

\item[(g)] 
$h_{9,2}: \\  0 \rightarrow 0001000100000001000101$\\
$1 \rightarrow 0000010001000100000101$\\
$2 \rightarrow 0000001000100000010100 $\\
This is a $22$-uniform morphism.

\end{itemize}
\begin{theorem}
Let $\bf w$ be an infinite squarefree sequence over the alphabet $\{0,1,2\}$.  Then 
\begin{itemize}
    \item[(a)]  $h_{3,13}({\bf w})$
 has $3$ squares and $13$ antisquares.  The squares are
 $0^2$, $1^2$, and $(01)^2$.  The antisquares are
 $01$, $10$, $0011$, $0110$, $1001$, $1100$,
 $000111$, $001110$, $011100$, $100011$, $110001$, $111000$, and $10010110$.

\item[(b)] $h'_{3,13}({\bf w})$ has $3$ squares and $13$ antisquares.  The squares are
 $0^2$, $1^2$, and $(01)^2$.  The antisquares are
 $01$, $10$, $0011$, $0110$, $1001$, $1100$,
 $000111$, $001110$, $011100$, $100011$, $110001$, $111000$, and $10010110$.
 
\item[(c)] 
$h_{4,9}({\bf w})$ 
has exactly $4$ squares and $9$ antisquares.  The squares are $0^2$, $1^2$, $(00)^2$, and $(01)^2$, and
the antisquares are $01$,   $10$,  $0011$, $0110$, $1100$, $011100$, $110001$, $111000$, and $1110000011$.

\item[(d)] 
$ h_{5,5}({\bf w})$
has exactly $5$ squares and $5$ antisquares.   The squares are $0^2$, $1^2$, $(00)^2$, $(01)^2$, and $(10)^2$, and
the antisquares are $01, 10, 0011, 0110,$ and $1100$.

\item[(e)] 
$h_{7,3}({\bf w})$
has $7$ squares and $3$ antisquares.    The squares are $0^2$, $(00)^2$, $(01)^2$, $(10)^2$, $(001)^2$, $(010)^2$, and
$(100)^2$, and the antisquares are
$01$, $10$, and $1001$.

\item[(f)] 
$h'_{7,3}({\bf w})$
has $7$ squares and $3$ antisquares.    The squares are $0^2$, $(00)^2$, $(01)^2$, $(10)^2$, $(001)^2$, $(010)^2$, and
$(100)^2$, and the antisquares are
$01$, $10$, and $1001$.

\item[(g)] 
$h_{9,2}({\bf w})$
has $9$ squares and $2$ antisquares.  The squares are
$0^2$, $(00)^2$, $(01)^2$, $(10)^2$, $(000)^2$, $(0001)^2$, $(0010)^2$, $(0100)^2$, and $(1000)^2$,
and the antisquares are $01$ and $10$.

\end{itemize}
\label{big}
\end{theorem}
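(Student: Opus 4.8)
For each morphism $h$ in the list there are two things to prove: first, that every square and every antisquare in the claimed list really does occur in $h(\mathbf{w})$, and second, that nothing else occurs. The first (``lower bound'') direction is routine: each listed factor appears already inside $h(u)$ for some short squarefree ternary word $u$ (for instance $0^2$ and $(01)^2$ appear inside the image of a single letter, and the longer antisquares appear across the image of a length-two or length-three factor), so I would simply exhibit a suitable preimage window for each. The real content is the second direction, and my plan is to reduce it to a finite computation by proving a uniform bound: \emph{there is a constant $T$, computable from $h$, such that $h(\mathbf{w})$ contains no square and no antisquare of order exceeding $T$.} Once such a $T$ is in hand, the remaining squares and antisquares all have bounded length and can be enumerated. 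The two structural facts about each $h$ that drive everything are that $h$ is injective on letters (indeed the images $\{h(0),h(1),h(2)\}$ form a code) and that this code is \emph{synchronizing}, meaning there is a bound $k_h$ so that any occurrence of a product $h(a_1)\cdots h(a_r)$ with $r\ge k_h$ inside $h(\mathbf w)$ must be aligned with the block boundaries; equivalently, no $h(a)$ occurs as an internal (misaligned) factor of $h(bc)$. Both injectivity and synchronization are finite checks on the images.

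\textbf{Bounding the order of squares.} Suppose $zz$ is a square factor of $h(\mathbf w)$ with period $p=|z|$ larger than $T$ (I will take $T$ a small multiple of $\max_a|h(a)|$, large enough that the argument below produces a nonempty preimage square). Since $p$ is large, the first copy of $z$ contains at least one fully aligned image block $C=h(w_q)$. The periodicity $w[j]=w[j+p]$ forces the length-$|C|$ window shifted by $p$ to equal $C$, so $C$ occurs again inside $h(\mathbf w)$; by synchronization this occurrence is aligned, which pins the shift $p$ to be compatible with the block structure, and injectivity then yields equalities $w_q=w_{q+\ell}$ across a whole range of block indices. This produces a genuine square $YY$ (with $Y$ nonempty, once $T$ is chosen large enough) in $\mathbf w$, contradicting squarefreeness. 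Hence all squares have order at most $T$. For the two nonuniform morphisms $h'_{3,13}$ and $h'_{7,3}$ the notion of ``block boundary'' is not a single modulus, so I would phrase synchronization in the variable-length code framework (bounded synchronization delay) and run the identical argument; the conclusion is the same.

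\textbf{Bounding the order of antisquares (the main obstacle).} The antisquare case is the delicate point, because the complement operation interacts with the morphism. Let $x\overline{x}$ be an antisquare of order $p$; if $p$ is large the first half $x$ contains a fully aligned image block $C=h(w_q)$. The antisquare condition says that the corresponding window in the second half equals $\overline{C}=\overline{h(w_q)}$. Thus, if the order is large, some $\overline{h(a)}$ occurs as a factor of $h(\mathbf w)$. My plan is therefore to verify, as a finite computation, the key property that \emph{for each letter $a$, the complemented image $\overline{h(a)}$ is not a factor of $h(\mathbf w)$} (it suffices to check this against the images of all short squarefree ternary words, since any short factor of $h(\mathbf w)$ lies inside such an image). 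Granting this property, no antisquare can have a fully aligned block in its first half whose shifted copy still lies in the second half, which forces the antisquare order below $2\max_a|h(a)|$. I expect this complement-avoidance verification to be where the carefully chosen structure of the images (for example, the common short prefix shared by the three images of each morphism) does the work, and it is the step that most needs the morphisms to be exactly these and not arbitrary.

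\textbf{Finite enumeration.} With an explicit bound $T$ on the order of every square and every antisquare, only finitely many lengths remain. Every factor of $h(\mathbf w)$ of length at most $2T$ appears inside $h(u)$ for some factor $u$ of $\mathbf w$ of bounded length, and $u$ is necessarily squarefree; moreover the set of squarefree ternary words of any fixed length is finite and computable. I would therefore generate all squarefree ternary words up to the required length, apply $h$, list all factors of length at most $2T$ of the resulting binary words, and collect those that are squares or antisquares. By the two bounds this collection contains every square and antisquare of $h(\mathbf w)$, and combined with the occurrence direction from the first paragraph it must equal exactly the stated list. The enumeration is entirely mechanical and can be carried out by computer, so the only genuinely nonroutine steps are the synchronization and complement-avoidance verifications underlying the two bounds.
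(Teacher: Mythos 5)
Your overall architecture — injectivity and synchronization of the code $\{h(0),h(1),h(2)\}$ verified by finite checks, a bound on the order of squares and antisquares, then mechanical enumeration of short factors via images of bounded-length squarefree words — is the same as the paper's, and your antisquare step is a sound variant of theirs: the paper bounds antisquare order by exhibiting one short uniformly recurrent factor $f$ with $\overline{f}$ not a factor ($f=0101$ for $h_{3,13}$, $f=0^4$ for the others, with the non-occurrence of $\overline{f}$ falling out of the square check, since $1010$ and $1111$ are squares not on the lists), whereas you take $f$ to be a full letter image $h(a)$ via an aligned block; both reduce to finite verifications and both work.

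However, your square-bounding step has a genuine gap. Take a square $uu$ in $h(\mathbf{w})$ with period $P=\ell q$ (in the $q$-uniform case) occurring at a position \emph{not} aligned with the block boundaries. The fully aligned blocks inside the first copy of $u$ number only $\ell-1$, so synchronization and injectivity yield $w_m=w_{m+\ell}$ for just $\ell-1$ consecutive indices — a repetition of exponent $2-1/\ell$ in $\mathbf{w}$, which is \emph{not} a square, so squarefreeness of $\mathbf{w}$ is not contradicted; and no choice of your threshold $T$ closes the deficit. Nor can the partial blocks at the two ends be used to force one more equality of letters, because the three images of each morphism deliberately share long common prefixes and suffixes (look at $h_{4,9}$), so equality of a proper prefix or suffix of two blocks does not pin down the underlying letter. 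What actually happens in this misaligned case, as the paper's proof works out, is that $u=ph(v)s$ with $p$ a suffix of $h(a)$, $s$ a prefix of $h(c)$, and $h(b)=sp$ — a rotation condition on a single letter's image — so that $h(abc)$, the image of a squarefree three-letter word, contains the \emph{medium} square $psps$, of period between $|h(2)|/2+1$ and $3|h(0)|/2$. Ruling this out requires the finite square check to cover all periods from $5$ up to $2|h(0)|-2$, which is exactly how the paper closes the case. Your finite enumeration phase would in fact compute the data needed for this, but your stated argument for "no squares of order exceeding $T$" never invokes it — it asserts a genuine square $YY$ in $\mathbf{w}$ directly, which is false in general. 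With this rotation analysis added (and the enumeration range chosen to include the medium periods), the rest of your plan goes through.
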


\begin{proof}
Let $h$ be any of the morphisms above.
We first show that large squares are avoided.
The $h$-images of the letters have been ordered such that $|h(0)|\ge|h(1)|\ge|h(2)|$.
A computer check shows that for every letter $i$ and every ternary word $w$, the factor $h(i)$ appears in $h(w)$ 
only as the $h$-image of $i$.
Another computer check shows that for every ternary squarefree word $w$, the only squares $uu$ with $|u|\le 2|h(0)|-2$
that appear in $h(w)$ are the ones we claim.
If $w$ contains a square $uu$ with $|u|\ge 2|h(0)|-1$, then $u$ contains the full $h$-image of some letter.
Thus, $uu$ is a factor of $h(avbvc)$ with $a,b,c$ single letters and $v$ a nonempty word.
Moreover, $a\ne b$ and $b\ne c$, since otherwise $avbvc$ would contain a square.
It follows that $u=ph(v)s$, so that $p$ is a suffix of $h(a)$, and $h(b)=sp$, and $s$ is a prefix of $h(c)$.
Thus, $h(abc)$ contains the square $psps$ with period $|ps|$ at least $|h(2)|/2+1$ and at most $3|h(0)|/2$.
This contradicts our computer check, which rules out squares with period at least 5 and at most $2|h(0)|-2$.

To show that large antisquares are avoided, it suffices to exhibit a factor $f$ such that $f$ is uniformly recurrent in $h(w)$
and $\overline{f}$ is not a factor of $h(w)$. We use $f=0101$ for $h_{3,13}$ and $f=0^4$ for the other morphisms. 
\end{proof}

\begin{remark}  
The uniform morphisms were found as follows:  
for increasing values of $q$, our program looks (by backtracking) for a binary word of length $3q$ corresponding to the image $h(012)$ of $012$ by a suitable $q$-uniform morphism $h$. Given a candidate $h$,
we check that $h(w)$ has at most $a$ squares and $b$ antisquares
for every squarefree word $w$ up to some length.
Standard optimizations are applied to the backtracking.
Squares and antisquares are counted naively
(recomputed from scratch at every step), which is sufficient
since the morphisms found are not too large.
\end{remark}

\begin{remark}
The reader can check that $h_{3,13} = h'_{3,13} \circ m$, where $m$ is the $18$-uniform morphism given by
\begin{align*}
0 &\rightarrow 021012102012021201 \\
1 &\rightarrow  021012102120210201 \\
2 &\rightarrow 021012102120102012 \ .
\end{align*}
\end{remark}

\begin{corollary}
There exists an infinite binary word having at most ten distinct squares and antisquares as factors, but the longest binary word having nine or fewer distinct squares and antisquares is of length $45$.
\end{corollary}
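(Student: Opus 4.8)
The plan is to split the statement into its two assertions and reduce each to the data already compiled in the table and in Theorem~\ref{big}. For the existence of an infinite binary word with at most ten distinct squares and antisquares, I would simply invoke Theorem~\ref{big}(d): fix any infinite squarefree word $\mathbf{w}$ over $\{0,1,2\}$ (such a word exists by Thue's theorem) and consider $h_{5,5}(\mathbf{w})$. By that part of the theorem it contains exactly $5$ squares and exactly $5$ antisquares. Since a single word cannot be simultaneously a square and an antisquare (both decompositions split the word at its midpoint, forcing $x = \overline{x}$, which is impossible over a binary alphabet for nonempty $x$), the set of square factors and the set of antisquare factors are disjoint, so the total number of distinct squares and antisquares is exactly $5 + 5 = 10$. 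The word $h_{7,3}(\mathbf{w})$ of Theorem~\ref{big}(e) works equally well, realizing the split $7 + 3$.

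The substantive half is showing that the longest finite binary word with nine or fewer distinct squares and antisquares has length exactly $45$. Write $L(a,b)$ for the table entry, i.e.\ the length of the longest binary word with at most $a$ squares and at most $b$ antisquares. The key reduction I would make is the identity
\[
\max\{\,|w| : w \text{ has at most } 9 \text{ squares-and-antisquares in total}\,\} \;=\; \max_{a+b\le 9} L(a,b).
\]
This follows from a short monotonicity argument: if a word $w$ has $s$ squares and $t$ antisquares with $s+t\le 9$, then $w$ is counted by $L(s,t)$ with $s+t \le 9$, giving ``$\le$''; conversely, any word witnessing $L(a,b)$ for some $a+b\le 9$ has at most $a$ squares and at most $b$ antisquares, hence total at most $a+b\le 9$, giving ``$\ge$''. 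So the problem becomes reading off the maximum of the table over the antidiagonal region $a+b\le 9$.

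Finally I would inspect that region of the table, where two things must be checked: that \emph{every} entry with $a+b\le 9$ is finite, and that the maximum of these finite entries is $45$. For the first point, the earliest $\infty$ entries occur at $(a,b)=(5,5)$ and $(7,3)$, both with $a+b=10$, so no $\infty$ appears for $a+b\le 9$. For the second, scanning the antidiagonals shows the largest value with $a+b\le 9$ is $L(7,2)=45$; the nearest competitors, such as $L(5,4)=40$ and $L(6,3)=38$, are strictly smaller. A word realizing $L(7,2)$ has at most $7$ squares and at most $2$ antisquares, hence total at most $9$ and length $45$, so the bound is attained and the longest such word has length exactly $45$. The genuine content here lies entirely in the table itself, whose finite entries were obtained by the backtracking search mentioned earlier in the text; I would take those computations as given. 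The only real obstacle beyond bookkeeping is making the monotonicity reduction precise and confirming that no $\infty$ entry lies in the region $a+b\le 9$, which the location of the first $\infty$ entries at total $10$ settles.
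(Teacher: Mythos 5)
Your proposal is correct and follows essentially the same route as the paper, which leaves this corollary's proof implicit: existence comes from Theorem~\ref{big}(d) (or (e)), and the value $45$ is exactly the table entry at $(a,b)=(7,2)$, which your monotonicity argument correctly identifies as the maximum over the finite region $a+b\le 9$ (the first $\infty$ entries indeed sit at $(5,5)$ and $(7,3)$, total $10$). Your disjointness observation (no word is both a square and an antisquare) is correct but not actually needed for the ``at most ten'' claim; otherwise the write-up just makes explicit the bookkeeping the paper takes for granted.
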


\begin{remark}
A word of length $45$ with nine distinct squares and antisquares is
$$000001000000010100000010000101000000010000101.$$
\end{remark}

\begin{corollary}
Every infinite word having at most ten distinct squares and antisquares has critical exponent at least
$5$, and there is such a word having $5$-powers but no powers of higher exponent.
\end{corollary}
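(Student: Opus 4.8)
The plan is to establish the two assertions separately: the tightness claim (the existence of a qualifying word whose critical exponent is exactly $5$) follows quickly from Theorem~\ref{big}, whereas the lower bound for \emph{every} such word I expect to require a backtracking search. For the witness I would take one of the morphic words of Theorem~\ref{big} whose total number of distinct squares and antisquares is exactly $10$, namely $h_{7,3}({\bf w})$ (or $h_{5,5}({\bf w})$) for ${\bf w}$ an infinite ternary squarefree word. By Theorem~\ref{big}(e) this word has at most $10$ distinct squares and antisquares, so it is admissible, and one checks that the $5$th power $00000$ occurs in it. The essential point is that Theorem~\ref{big}(e) classifies \emph{all} of its squares, and each has order at most $3$; consequently any factor of exponent at least $2$ has primitive period at most $3$, for a factor of primitive period $p$ and exponent at least $2$ contains a square of order $p$, which for $p\ge 4$ would be a square of order at least $4$ absent from the classification.

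It then remains to bound the exponent period by period. For period $1$ the only admissible single-letter squares are $0^2$ and $(00)^2=0^4$, so a run of $0$'s has length at most $5$ (a $0^6$ would be the excluded square $(000)^2$), and since $00000$ occurs this period contributes exponent exactly $5$. For period $2$ the admissible squares are only $(01)^2$ and $(10)^2$, so a period-$2$ run has length at most $7$ (length $8$ would contain $(0101)^2$, of order $4$), giving exponent at most $7/2$; likewise period $3$ contributes at most $11/3$. Since no larger period occurs, the critical exponent of $h_{7,3}({\bf w})$ equals $5$, realized by $0^5$ and never exceeded, which proves the tightness claim.

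For the lower bound I would argue by contraposition. If an infinite binary word with at most $10$ distinct squares and antisquares had critical exponent strictly less than $5$, then it would be $5$th-power-free, so it suffices to prove that the language of binary words that are simultaneously $5$th-power-free and carry at most $10$ distinct squares and antisquares is \emph{finite}; every infinite word with at most $10$ such factors would then necessarily contain a $5$th power and hence have critical exponent at least $5$. I would establish this finiteness by the same backtracking search used for the finite entries of the table, extending a candidate word one letter at a time and killing a branch as soon as the running number of distinct squares and antisquares exceeds $10$ or a $5$th power is created; the table already narrows the relevant distributions $(a,b)$ with $a+b\le 10$ to $(5,5)$ and $(7,3)$. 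The main obstacle is exactly the completeness and termination of this search: the substance of the lower bound is that the pruned search tree is finite, so one must verify that all admissible one-letter extensions are explored and that every branch dies below an explicit length bound. By contrast, the tightness half is immediate once Theorem~\ref{big} is available, since the complete classification of squares there collapses the critical-exponent computation to a finite check over periods $1$, $2$, and $3$.
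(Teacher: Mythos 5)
Your proposal is correct and takes essentially the same route as the paper: the lower bound comes from the same backtracking search (the paper reports that the longest $5$th-power-free binary word with at most ten distinct squares and antisquares has length $57$), and the witness is a morphic word from Theorem~\ref{big} whose complete square classification reduces the critical-exponent computation to a finite check over small periods. The only cosmetic difference is that you take $h_{7,3}({\bf w})$ as the primary witness where the paper uses $h_{5,5}({\bf w})$; both have exactly ten squares plus antisquares and critical exponent exactly $5$, attained by $0^5$.
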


\begin{proof}
By the usual backtracking approach, we can easily verify that the longest finite
word having at most ten distinct antisquares, and critical exponent $< 5$ is of
length $57$.   One such example is
$$010001010000100100100001010010010100001001001000010100010.$$
On the other hand, if $\bf w$ is any squarefree ternary infinite word, then
from above  we know that the only possible squares that
can occur in $h_{5,5} ({\bf w})$ are of the form
$x^2$ for $x \in \{ 0, 1, 00, 01, 10 \}$.  
It is now easy to verify that the largest power of $0$ that occurs in  $h_{5,5} ({\bf w})$
is $0^5$; the largest power of $1$ that occurs is $1^2$; the largest power
of $01$ that occurs is $(01)^{5/2}$; and the largest power of
$10$ that occurs is $(10)^{5/2}$.
\end{proof}

\section{Pseudosquare avoidance}

In this section we discuss avoiding $x x'$ where $x'$ belongs to some large class of modifications of $x'$.  This is in the spirit of previous results \cite{Rumyantsev&Ushakov:2006,Durand&Levin&Shen:2008,Miller:2012}, where one is interested in avoiding factors of low Kolmogorov complexity.   The problems we study are not quite so general, but our results are effective, and we obtain explicit bounds.

\subsection{Avoiding pseudosquares for permutations}

Here we are interested in avoiding patterns of the form $x p(x)$, for
{\it all} codings $p$ that are permutations of the underlying alphabet.  Of course, this is impossible for words of length $\geq 2$  strictly  as  stated,  since  every  word  of  length  2  is  of  the  form $a p(a)$ where $p$ is  the permutation sending the letter $a$ to $p(a)$.  Thus
it is reasonable to ask about avoiding $x p(x)$ for all
words $x$ of length $\geq n$.  Our first result shows this
is impossible for $n = 2$.

\begin{theorem}
For all finite alphabets $\Sigma$,
and for all words $w$ of length $\geq 10$ over $\Sigma$,
there exists a permutation $p$ of $\Sigma$ and
a factor of $w$ of the form $x x'$, where $x' = p(x)$,
and $|x| \geq 2$.
\end{theorem}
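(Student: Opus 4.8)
The plan is to eliminate the alphabet from the problem and reduce to a short combinatorial analysis of the ``shape'' of $w$. The crucial first observation is that, for a factor $xx'$ of $w$ with $|x|=|x'|=k$, a permutation $p$ of $\Sigma$ with $x'=p(x)$ exists \emph{exactly} when $x$ and $x'$ have the same equality pattern, meaning $x_i=x_j \Longleftrightarrow x'_i=x'_j$ for all $i,j$. Indeed, this condition says precisely that the assignment $x_i\mapsto x'_i$ is a well-defined injective partial map on $\Sigma$, and any injective partial map of a finite set extends to a permutation (its domain and range are equinumerous, hence so are their complements). So the existence of the desired factor depends only on the equality type of $w$ and not on $\Sigma$ at all; I may therefore reason purely about which positions of $w$ carry equal letters.

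Since containing such a factor is a property of factors, it suffices to prove that every word of length $10$ contains one; a word of length $>10$ then contains it inside any length-$10$ factor. I would first dispose of blocks of length $k=2$: the halves $w_iw_{i+1}$ and $w_{i+2}w_{i+3}$ have the same pattern iff $[w_i=w_{i+1}]=[w_{i+2}=w_{i+3}]$, so if $w$ avoids all $k=2$ pseudosquares then the indicator $e_i:=[w_i=w_{i+1}]$ satisfies $e_{i+2}\neq e_i$ for every $i$. This forces $e$ to be $4$-periodic with one period of the form $(a,b,\overline{a},\overline{b})$.

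The payoff is that a $4$-periodic $e$ necessarily contains runs of three equal letters at regular spacing. A period $(a,b,\overline{a},\overline{b})$ has exactly two entries equal to $1$, and a quick check of the four possibilities for $(a,b)$ shows these two $1$'s are always cyclically adjacent; hence there is some $j$ with $e_j=e_{j+1}=1$, giving $w_j=w_{j+1}=w_{j+2}$, and this pattern recurs with period $4$. Taking two consecutive such runs, at positions $j$ and $j+4$, and adjoining to each either the preceding or the following letter (which differs from the run, since the neighbouring value of $e$ is $0$), yields two \emph{consecutive} blocks of length $4$ both of pattern $1222$ (or both $1112$). That is a pseudosquare with $k=4$. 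Tracking where the first run can occur in each of the four cases, the latest it is forced is the case $(a,b)=(1,0)$, where the runs sit at positions $4$ and $8$ and the blocks occupy positions $3$ through $10$; every other case is forced earlier. Hence any word of length $\geq 10$ contains the desired factor, and the bound is sharp, as $001222322$ (of type $(1,0)$) is pseudosquare-free of length $9$.

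I expect the only real content to be the opening reduction: recognizing that ``$x'=p(x)$ for some permutation $p$'' is equivalent to ``$x$ and $x'$ share an equality pattern,'' which is what makes the statement independent of $\Sigma$ and thus finite. After that the argument is short; the one point requiring care is the uniformity over the four periodic types, where one must check that the run-of-three always exists (the cyclic adjacency of the two $1$'s) and that the two length-$4$ blocks still fit inside a length-$10$ window in the tightest case $(1,0)$.
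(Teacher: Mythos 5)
Your proof is correct, and it takes a genuinely different route from the paper: the paper proves this theorem purely computationally, citing only ``the usual tree-traversal technique, where we extend the alphabet size at each length extension,'' i.e., a backtracking search over words considered up to renaming of letters. Your opening lemma --- that $x'=p(x)$ for some permutation $p$ of $\Sigma$ if and only if $x$ and $x'$ share the same equality pattern, because an injective partial map of a finite set extends to a permutation --- is precisely the fact that makes the paper's search space finite and alphabet-independent, but the paper leaves it implicit; you make it explicit and then replace the search by a structural argument. I checked your case analysis and it is sound: avoiding all $|x|=2$ pseudosquares forces $e_i=[w_i=w_{i+1}]$ to satisfy $e_{i+2}\neq e_i$, hence to be $4$-periodic of type $(a,b,\overline{a},\overline{b})$; in each of the four cases the two $1$'s per period are cyclically adjacent, giving runs $w_j=w_{j+1}=w_{j+2}$ every four positions, and since $e_{j-1}=e_{j+2}=0$ the adjacent letters differ from the runs, so two consecutive length-$4$ blocks of pattern $aaab$ (or $abbb$) arise --- a $|x|=4$ pseudosquare, with the tightest case $(a,b)=(1,0)$ indeed occupying positions $3$ through $10$ (runs at $j=4$ and $j=8$, preceded by the distinct letters $w_3$ and $w_7$), so length $10$ always suffices; your witness $001222322$ correctly avoids all pseudosquares with $|x|\in\{2,3,4\}$, certifying sharpness (which the theorem does not claim but the paper's framing suggests). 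What the paper's approach buys is uniformity and safety --- the same mechanical search handles the transformation analogues in the next subsection, where the thresholds $31$ and $16$ would be unpleasant to derive by hand --- while yours buys a self-contained, human-verifiable proof that also explains \emph{why} $10$ is the threshold, via the forced periodic structure of the equality-indicator sequence, and produces an explicit extremal word.
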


\begin{proof}
Using the usual tree-traversal technique, where we extend the alphabet
size at each length extension.
\end{proof}

We now turn to the case of larger $n$.  For $n \geq 3$, and $k = 2$, we can avoid all factors of
the form $x p(x)$.
Of course, this case is particularly simple, since there are only two permutations of the alphabet:
the identity permutation that leaves letters invariant, and the map $x \rightarrow 
\overline{x}$, which changes $0$ to $1$ and vice versa.

\begin{theorem}
There exists an infinite word ${\bf w}$ over the binary alphabet
$\Sigma_2 = \{ 0, 1 \}$ that avoids $xx$ and $x\overline{x}$ for all
$x$ with $|x| \geq 3$.  
\label{two}
\end{theorem}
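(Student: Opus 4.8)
The plan is to reduce the statement to one already established in Theorem~\ref{big}. First I would observe that over the binary alphabet $\Sigma_2$ there are exactly two permutations: the identity, which fixes every letter, and complementation $x \mapsto \overline{x}$. Hence a factor of the form $xp(x)$ is either a square $xx$ or an antisquare $x\overline{x}$, and avoiding $xx$ and $x\overline{x}$ for all $x$ with $|x|\ge 3$ is precisely the same as exhibiting an infinite binary word in which every square and every antisquare has order at most $2$.

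Second, I would invoke Theorem~\ref{big}(d). Let $\mathbf{t}$ be any infinite squarefree word over $\{0,1,2\}$; such words exist by the classical work of Thue. Setting $\mathbf{w} = h_{5,5}(\mathbf{t})$, Theorem~\ref{big}(d) tells us that the only squares occurring in $\mathbf{w}$ are $0^2$, $1^2$, $(00)^2$, $(01)^2$, and $(10)^2$, all of order at most $2$, and that the only antisquares are $01$, $10$, $0011$, $0110$, and $1100$, again all of order at most $2$. Consequently $\mathbf{w}$ contains no factor $xx$ or $x\overline{x}$ with $|x|\ge 3$, which is exactly the required avoidance property.

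Thus the real content is entirely packaged inside Theorem~\ref{big}, and essentially no further work is needed once $h_{5,5}$ is in hand; the main obstacle was already surmounted there. Were Theorem~\ref{big} unavailable, the difficulty would lie precisely in ruling out \emph{long} squares and \emph{long} antisquares in the morphic image: one must verify that $h_{5,5}$ is synchronizing (each block $h_{5,5}(i)$ occurs in $h_{5,5}(\mathbf{t})$ only as the image of $i$), reduce any hypothetical long square to a short square $psps$ sitting inside some $h_{5,5}(abc)$ and contradict a finite computer check, and bound the length of any antisquare by exhibiting a uniformly recurrent factor $f$ (here $0^4$) whose complement $\overline{f}$ never occurs in $h_{5,5}(\mathbf{t})$. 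These are exactly the steps already carried out in the proof of Theorem~\ref{big}, so I would simply cite them.
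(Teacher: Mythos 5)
Your proposal is correct and takes essentially the same route as the paper's own first argument, which likewise derives the theorem by citing Theorem~\ref{big} (the paper additionally sketches an alternative construction, the fixed point of an $8$-letter morphism followed by a coding, verified with {\tt Walnut}). In fact your citation is the more careful one: the paper's proof points to part~(c), i.e., $h_{4,9}$, whose image contains the order-$3$ antisquares $011100$, $110001$, $111000$ (and the order-$5$ antisquare $1110000011$), so part~(c) does not yield the theorem as stated; part~(d), $h_{5,5}$, whose squares and antisquares all have order at most $2$, is exactly the right reference, as you observed.
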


\begin{proof}
We can use the morphism in Theorem~\ref{big} (c).  Alternatively, a simpler proof comes from
the fixed point of the morphism
\begin{align*}
0 &\rightarrow 01\\
1 &\rightarrow 23\\
2 &\rightarrow 24\\
3 &\rightarrow 51\\
4 &\rightarrow 06\\
5 &\rightarrow 01\\
6 &\rightarrow 74 \\
7 &\rightarrow 24
\end{align*}
followed by the coding $n \rightarrow n \bmod 2$.
We can now use {\tt Walnut} \cite{Mousavi:2016} to verify that the 
resulting $2$-automatic word has the desired property.
This word has exactly
$5$ distinct squares:  
$$0^2, 1^2, (00)^2, (01)^2, (10)^2,$$
and exactly $6$ distinct antisquares:
$$ 01, 10, 0011, 0110, 1001, 1100.$$
\end{proof}

\subsection{Avoiding pseudosquares for transformations}

In the previous subsection we considered permutations of the
alphabet.  We now generalize this to {\it transformations} of 
the alphabet, or, in other words, to arbitrary codings (letter-to-letter
morphisms).

\begin{theorem}
\leavevmode
\begin{itemize}
    \item[(a)]  For all finite alphabets $\Sigma$, and all words $w$ of length $\geq 31$ over $\Sigma$, there exists a transformation $t:\Sigma^* \rightarrow \Sigma^*$ such that $w$ contains a factor of the form $x t(x)$ for $|x| \geq 3$.

\item[(b)]  For all finite alphabets $\Sigma$,
and all words $w$ of length $\geq 16$ over $\Sigma$, there exists
a transformation $t$ of $\Sigma$ such that $w$ contains 
a factor of the form $x x'$, where $x' = t(x)$ or $x = t(x')$
and $|x| \geq 3$.
\end{itemize}
\end{theorem}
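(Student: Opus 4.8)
The plan is to reduce both statements to a finite backtracking search by reformulating the pattern condition in terms of the equivalence relation a word induces on its positions. For two words $u,v$ of the same length $\ell$, a transformation $t$ with $t(u)=v$ exists precisely when equal letters in $u$ force equal letters in $v$: whenever $u_i=u_j$ we must have $v_i=v_j$, after which $t$ can be defined on the letters occurring in $u$ and extended arbitrarily elsewhere. Phrasing this through the partition of $\{1,\dots,\ell\}$ into classes of equal letters, $t(u)=v$ is possible iff the partition induced by $u$ refines the one induced by $v$. Thus $w$ contains a factor $x\,t(x)$ with $|x|\ge 3$ (part (a)) iff $w$ has a factor $uv$ with $|u|=|v|\ge 3$ whose first half refines its second half; for part (b) the weaker requirement is that the two halves induce comparable partitions, so that either $t(u)=v$ or $t(v)=u$.

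The second ingredient is that this refinement/comparability condition is invariant under any bijective renaming of $\Sigma$, since it depends only on the position partitions. So I would put words in canonical form, renaming letters in order of first appearance, and note that canonicalization commutes with taking prefixes. Consequently every word avoiding the pattern has all of its prefixes avoiding it, and each prefix is obtained from the previous one either by repeating an already-used letter or by introducing a single fresh letter (all fresh letters being equivalent under renaming). This is exactly the tree-traversal set-up already used for the permutation theorem: from the empty word we branch by ``reuse or add one new letter,'' and at each extension we check only the newly created even-length factors ending at the last position, of half-length $3,4,\dots$, for the forbidden condition.

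I would then run this search to exhaustion. Because every branch is killed once it would force the pattern, the tree is finitely branching, and the computation certifies that its deepest surviving node has length $30$ in case (a) and length $15$ in case (b); hence no avoiding word is longer, and every word of length $\ge 31$ (resp.\ $\ge 16$) must contain the required factor. To confirm that these thresholds are optimal I would also exhibit explicit extremal avoiding words of lengths $30$ and $15$, in the spirit of the extremal examples recorded elsewhere in the paper.

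The main obstacle is justifying that quantifying ``over all finite alphabets $\Sigma$'' collapses to a finite computation. The crux is the sufficiency of introducing at most one fresh letter per step: one must argue that no avoiding word can be lengthened by ``banking'' several unused symbols, which follows from renaming-invariance together with the fact that a half $u$ consisting of pairwise-distinct letters induces the finest partition and therefore refines \emph{every} second half---so an all-distinct half can never block the pattern (and for part (b) neither half may be all-distinct). Once this reduction is in place, the remaining content is the bounded-depth termination of the search, which is verified computationally rather than by hand.
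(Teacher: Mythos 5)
Your proposal is correct and is essentially the paper's own proof: the paper's entire argument is ``the usual tree-traversal technique, where we extend the alphabet size at each length extension,'' which is precisely your canonicalized backtracking search with the ``reuse a letter or introduce one fresh letter'' branching. Your partition-refinement characterization of when $t(u)=v$ is solvable, and the renaming-invariance argument collapsing the quantification over all finite alphabets, simply make explicit the justification the paper leaves implicit.
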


\begin{proof}
Using the usual tree-traversal technique, where we extend the alphabet
size at each length extension.
\end{proof}

We now specialize to the binary alphabet.   This case is particularly simple, since in addition to the
two permutations of the alphabet, the only other transformations are the ones sending both
$0,1$ to a single letter (either $0$ or $1$).

\begin{theorem}
There exists an infinite word ${\bf w}$ over the binary alphabet
$\Sigma_2 = \{ 0, 1 \}$ avoiding $0^4$, $1^4$, and $xx$ and $x \overline{x}$ 
for every $x$ with $|x| \geq 4$.
In other words, $\bf w$ avoids both
$x t(x)$ and $t(x) x$ for
$|x| \geq 4$ and all transformations $t$.
There is no such infinite word if $4$ is changed to $3$.
\end{theorem}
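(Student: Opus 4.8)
The plan is to establish the two halves of the statement separately: first the existence of an infinite binary word avoiding $0^4$, $1^4$, and all squares and antisquares of order at least $4$, and then the impossibility once $4$ is replaced by $3$. Before constructing anything, I would record the reduction that justifies the ``in other words'' reformulation: over $\Sigma_2$ the only transformations are the identity, the complement $x \mapsto \overline x$, and the two constant maps. For $|x| \geq 4$ the identity yields the squares $xx$, the complement yields the antisquares $x\overline x$ and $\overline x x$ (the latter being an antisquare of the same order, via $\overline x x = y\overline y$ with $y = \overline x$), and each constant map forces a factor containing $0^4$ or $1^4$. Hence a word avoiding $0^4$, $1^4$, and all squares and antisquares of order $\geq 4$ automatically avoids every $x t(x)$ and $t(x) x$ with $|x| \geq 4$, and it suffices to work with the concrete pattern list.

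For the existence half, I would exhibit an explicit automatic word $\mathbf w$, obtained as the image under a coding of the fixed point of a suitable uniform morphism, and then verify its properties mechanically with {\tt Walnut} exactly as in the proof of Theorem~\ref{two}. The point is that each required property is a single first-order sentence over $\mathbf w$: that $0^4$ and $1^4$ do not occur is a bounded existential statement, while ``no square of order $\geq 4$'' and ``no antisquare of order $\geq 4$'' are the assertions that there are no $i$ and no $n \geq 4$ with $\mathbf w[i+j] = \mathbf w[i+n+j]$ (respectively $\mathbf w[i+j] \neq \mathbf w[i+n+j]$) for all $j < n$. Since these quantify over all orders $n$ at once, {\tt Walnut} settles the infinitely many orders in one computation. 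Note that the word of Theorem~\ref{two} does \emph{not} work here, because it contains $(00)^2 = 0^4$; the candidate must instead have all runs of length at most $3$, while it is now \emph{allowed} the order-$2$ and order-$3$ squares and antisquares that Theorem~\ref{two} forbade. As an alternative I would run the uniform-morphism search described in the Remark following Theorem~\ref{big}, producing a morphism $g$ such that $g(\mathbf v)$ has the desired properties for every squarefree ternary $\mathbf v$, and then argue as in the proof of Theorem~\ref{big}: a computer check disposes of short squares, and any long square forces a factor $psps$ whose period lies in the already-excluded range.

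For the impossibility half I would argue by a terminating backtracking search. The set $L$ of finite binary words avoiding $0^3$, $1^3$, and all squares and antisquares of order $\geq 3$ is factor-closed, so by K\"onig's lemma an infinite word with these properties exists if and only if $L$ is infinite. I would therefore carry out the standard backtrack over $L$: extend a candidate letter by letter, pruning whenever a forbidden factor appears. The search terminates, every branch dying at bounded length, which shows $L$ is finite and hence that no infinite word exists; the longest surviving word gives the exact bound. The constraints here are severe — runs of length at most $2$, together with the prohibition of all squares and antisquares of order $3$ or more — which is precisely why the tree collapses.

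The conceptual crux is the existence verification: one must rule out squares and antisquares of \emph{every} order $\geq 4$ simultaneously, not merely the short ones, and this is exactly where I rely either on the automaton-theoretic quantification of {\tt Walnut} or on the morphic ``$psps$'' argument of Theorem~\ref{big} to reduce infinitely many orders to a finite check. For the impossibility half the only thing to guard against is mistaking a merely long search for a finite one: the argument is valid only because the backtracking genuinely exhausts the (finite) tree of allowed words, and it is the appeal to K\"onig's lemma that converts this finiteness into the nonexistence of an infinite word.
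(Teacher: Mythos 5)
Your proposal is correct and follows essentially the same route as the paper: the paper's proof likewise takes the coding $n \mapsto \lfloor n/3 \rfloor$ of the fixed point of a uniform morphism over a six-letter alphabet ($0 \rightarrow 01$, $1 \rightarrow 23$, $2 \rightarrow 45$, $3 \rightarrow 21$, $4 \rightarrow 23$, $5 \rightarrow 42$) and verifies all the avoidance properties with {\tt Walnut}, with the impossibility at order $3$ settled by the standard finite search, exactly as you describe. The only element you defer to a search is the explicit morphism itself, which is what the paper supplies; your reduction of the four binary transformations to the pattern list, and your observation that the word of Theorem~\ref{two} fails here because it contains $(00)^2 = 0^4$, are both accurate.
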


\begin{proof}
Use the fixed point of the morphism
\begin{align*}
0 &\rightarrow 01\\
1 &\rightarrow 23\\
2 &\rightarrow 45\\
3 &\rightarrow 21\\
4 &\rightarrow 23\\
5 &\rightarrow 42\\
\end{align*}
followed by the coding $n \rightarrow \lfloor n/3 \rfloor$.  The result can now easily be verified with {\tt Walnut}.
\end{proof}

\subsection{Avoiding pseudosquares with morphic images}

In this subsection we consider simultaneously avoiding all patterns of the
form $x h(x)$, for all morphisms $h$ defined 
over $\Sigma_k = \{ 0,1,\ldots, k-1\}$.  Clearly this is impossible if $h$ is
allowed to be erasing (that is, some images are allowed
to be empty), or if $x$ consists of a single letter.
So once again we consider the question for sufficiently
long $x$.

For this version of the problem, it is particularly hard to obtain experimental data, because 
the problem of determining, given $x$ and $y$,
whether there is a morphism $h$ such that
$y = h(x)$, is NP-complete \cite{Angluin:1980,Ehrenfeucht&Rozenberg:1979}.

\begin{theorem}\label{thm9dot5}
No infinite binary word avoids all factors of the form $xh(x)$,
for all nonerasing binary morphisms $h$, with $|x|\ge4$.
\end{theorem}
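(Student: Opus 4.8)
The plan is to reduce the statement to a terminating finite computation and then carry out a backtracking search. I first observe that the set $L$ of binary words avoiding every factor of the form $x h(x)$ (with $h$ a nonerasing binary morphism and $|x| \ge 4$) is closed under taking factors: any forbidden factor inside a factor of $w$ is also a factor of $w$. In particular every prefix of an infinite avoiding word would lie in $L$, so if an infinite avoiding word existed, $L$ would be infinite. It therefore suffices to prove that $L$ is finite, i.e. that there is an explicit bound $N$ with $|w| \le N$ for all $w \in L$; the backtracking search below enumerates $L$ as a finitely branching, factor-closed tree and will establish exactly this.

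The apparent difficulty is that the family of forbidden factors is infinite, since $h$ ranges over all nonerasing binary morphisms, so avoidance is not described by a finite set of forbidden factors and no naive automaton search applies. This is resolved by a finiteness observation that makes avoidance decidable for a single finite word. Any factor $x h(x)$ has length $|x| + |h(x)| \le |w|$, and since $h$ is nonerasing we have $|h(x)| \ge |x|$, hence $|x| \le |w|/2$. For a fixed candidate $x = w[i..i+\ell-1]$ with $\ell \ge 4$ and a fixed candidate image $y$ immediately following it, write $a$ and $b$ for the numbers of $0$'s and $1$'s in $x$. Then $y = h(x)$ for some nonerasing $h$ precisely when there are lengths $p = |h(0)| \ge 1$ and $q = |h(1)| \ge 1$ with $a p + b q = |y|$ for which the induced block factorization of $y$ assigns one constant block to all $0$-positions of $x$ and one constant block to all $1$-positions. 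There are only $O(|y|)$ pairs $(p,q)$ to test, and for each the factorization is forced, so the test is finite; thus at every length only finitely many morphisms are relevant and membership in $L$ is decidable.

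I would then run the usual backtracking search, extending candidate words one letter at a time and pruning a branch as soon as the newly completed suffix creates a factor $x h(x)$; only factors ending at the last position need to be examined at each step, using the test above. The outcome is that every branch dies at bounded depth, the search terminates, and it returns an explicit maximum length $N$, proving $L$ finite and hence the theorem.

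Finally, a word on where the real obstacle lies. Taking $h$ constant equal to $0$ (resp.\ $1$) forbids $0^4$ (resp.\ $1^4$) away from the prefix, and taking $h$ the identity forbids all squares of order $\ge 4$; but these constraints alone are satisfiable by an infinite word, since the $3$-square word of Fraenkel and Simpson, whose only squares are $0^2$, $1^2$, $(01)^2$, contains no $0^4$, no $1^4$, and no square of order $\ge 4$. Hence the finiteness of $L$ cannot be forced by the constant and identity morphisms; it must come from the complementation morphism together with the non-uniform morphisms. The genuine work of the proof, and the part I expect to be the main obstacle, is the computational verification that incorporating all of these does indeed cut every sufficiently long branch, so that the search tree is finite.
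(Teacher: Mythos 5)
Your proposal is correct and coincides with the paper's first proof of this theorem (a direct backtracking computer check, which the paper reports terminates in under a second), and you properly supply the one detail the paper glosses over: for binary $x$ the test of whether $y = h(x)$ for some nonerasing $h$ is decidable in polynomial time by enumerating the pairs $(|h(0)|, |h(1)|)$ with $|h(0)|\cdot a + |h(1)|\cdot b = |y|$, so the infinite family of forbidden factors causes no difficulty. The paper also gives a second, more conceptual proof you did not anticipate --- passing to a uniformly recurrent counterexample and showing it must avoid $000$, $111$, $0100$ (up to symmetry), thereby reducing the statement to the classical finite check that no infinite binary word avoids these three factors together with all squares $xx$, $|x|\ge 4$ --- but that is an alternative route, not a gap in yours.
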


\begin{proof}
This can be checked by computer in less than a second.
We give another proof that is a reduction to a more classical question of avoiding large squares and a finite set of factors.

Let $\bf w$ be a potential counter-example to Theorem~\ref{thm9dot5}.
Without loss of generality, we can assume that $\bf w$ is uniformly recurrent (see, e.g., \cite[Lemma 2.4]{deLuca&Varricchio:1991a}).
Suppose, to get a contradiction, that $\bf w$ contains the factor $000$.
Since ${\bf w} \ne 0^{\omega}$, the word $\bf w$ contains $1000$.
Since $\bf w$ is uniformly recurrent,
the factor $1000$ extends to a factor $1000u000$,
where $u$ is a nonempty finite word,
which is a forbidden occurrence of $xh(x)$.
So $\bf w$ avoids $000$, and by symmetry, the word $\bf w$ also avoids $111$.
Suppose, to get a contradiction, that $\bf w$ contains both $0100$ and $1011$.
The factor $0100$ extends to $01001$.
Since $\bf w$ is uniformly recurrent and contains $11$, the
word $\bf w$ contains $01001u11$, where $u$ is a nonempty finite word,
which is a forbidden occurrence of $xh(x)$.
So $\bf w$ does not contain both $0100$ and $1011$,
and we assume without loss of generality that $\bf w$ avoids $0100$.

Using the usual tree-traversal technique,
we can now easily check that no infinite binary word avoids
$000$, $111$, $0100$, and every square $xx$ with $|x|\ge4$.
Thus, $\bf w$ does not exist. 
\end{proof}

\begin{theorem}\label{thm10prime}
There exists an infinite binary word that avoids all factors of the form $xh(x)$,
for all nonerasing binary morphisms $h$, with $|x|\ge5$.
\end{theorem}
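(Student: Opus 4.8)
The plan is to construct an explicit infinite binary word and reduce the statement to a finite, mechanically-checkable list of avoidance conditions, mirroring the reduction used in the proof of Theorem~\ref{thm9dot5}. The key observation is that avoiding $xh(x)$ for \emph{all} nonerasing binary morphisms $h$ and all $x$ with $|x|\ge 5$ is equivalent to avoiding all genuine squares $xx$ of order $\ge 5$ \emph{together with} a finite set of short forbidden factors coming from the low-complexity degrees of freedom a morphism can exploit. Indeed, a forbidden occurrence $xh(x)$ with $|x|\ge 5$ forces $h$ to act on the two binary letters; the dangerous cases are when $h$ maps the alphabet to very short images (lengths $0$, $1$, or $2$ per letter, minus the erasing case), and each such choice translates, after unfolding $h(x)$, into either a square, an antisquare-type pattern, or a bounded-length factor constraint on the candidate word. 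So first I would enumerate, exactly as in the companion theorem, the finitely many ``shapes'' of $h$ that can produce a forbidden pattern at bounded cost, and extract from them a finite forbidden set $F$ of short factors.

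Second, I would exhibit a concrete word. The natural candidate is $h_{5,5}(\mathbf{w})$ from Theorem~\ref{big}(d), applied to a squarefree ternary word $\mathbf{w}$, since that construction already guarantees only the five squares $0^2,1^2,(00)^2,(01)^2,(10)^2$ and the five antisquares $01,10,0011,0110,1100$ occur; in particular it contains no square of order $\ge 3$, hence no square $xx$ with $|x|\ge 5$, and by the critical-exponent computation in the final corollary it contains no $0^6$ or $1^3$. These strong structural guarantees are exactly the kind of uniform bounds needed to kill morphic pseudosquares, so I would check that this image, or a similarly-constructed automatic word, satisfies every constraint in $F$.

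Third, I would verify the finite list of conditions. Because $h_{5,5}(\mathbf{w})$ is $36$-uniform applied to a squarefree word, its set of factors up to any fixed length is finite and computable; the antisquare and square content is already pinned down by Theorem~\ref{big}(d). So confirming that no factor of the form $xh(x)$ with $|x|\ge 5$ occurs reduces to checking the finitely many short forbidden factors in $F$ against the (finite) factor set of the word, which can be done either by direct computation on the morphic image or, after recoding, by a {\tt Walnut} query on the associated automatic sequence. Alternatively, one can bypass the explicit morphism and simply invoke the backtracking search: assert that an infinite binary word avoiding the square orders $\ge 5$ together with $F$ exists, and produce it as the fixed point of a suitable uniform morphism, verified mechanically.

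The main obstacle is the bookkeeping in the reduction step: one must argue carefully that \emph{every} nonerasing binary morphism $h$ producing a forbidden $xh(x)$ with $|x|\ge 5$ is captured by one of the finitely many short patterns in $F$, rather than by an unbounded family. The subtlety is that $h(x)$ can be long even when the images $h(0),h(1)$ are short, so one needs the pigeonhole-style argument that a long match between a suffix of the structured word and $h(x)$ forces a genuine square (handled by the order-$\ge 5$ square condition) and that the only \emph{non}-square-induced forbidden patterns arise from the bounded-length boundary interactions. Once that equivalence is established, the remaining verification is routine and the word from Theorem~\ref{big}(d) does the job.
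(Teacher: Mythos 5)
There is a genuine gap, and it is fatal for your chosen witness. The word $h_{5,5}(\mathbf{w})$ of Theorem~\ref{big}(d) contains $0^5$: the corollary you cite states that the largest power of $0$ occurring in it is exactly $0^5$, and indeed $0^5$ sits inside $h_{5,5}(0)=10100000\cdots$, so it occurs infinitely often and with at least five letters preceding it. Now take the constant morphism $h(0)=h(1)=0$, which is nonerasing, and let $x$ be the five letters immediately preceding such an occurrence; then $h(x)=0^5$ and $xh(x)$ is a factor of $h_{5,5}(\mathbf{w})$ with $|x|=5$. More generally, constant morphisms turn \emph{any} exponent-$5$ repetition $u^5$ occurring at position $\geq 5$ into a forbidden $xh(x)$, so any valid witness must essentially be $(5,1)$-power-bounded in a way your candidate is not. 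This is precisely why the paper does not reuse an earlier morphism but constructs a fresh $246$-uniform morphism whose image is $(11/6^+,4)$-free, has only the cubes $000$ and $111$, and contains no fourth power of a letter; your structural guarantees (few distinct squares and antisquares) say nothing about period-$1$ repetitions of exponent $5$, and the critical exponent of your word is exactly $5$, attained by $0^5$ --- the worst possible place to attain it.

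Your reduction step is also asserted rather than established, and it is not a word-independent equivalence. You claim the dangerous morphisms are those with images of length at most $2$, but that is false a priori: in the paper's proof the bounds on image lengths ($|h(b)|\le 3$ for a letter $b$ contained in a square of $x$, and $|h(\overline{b})|\le 30$, whence $|h(0)|+|h(1)|\le 33$ and then $|x|\le 16$) are \emph{derived} from bespoke combinatorial properties of the constructed word --- e.g., every factor of length $\geq 5$ (with four listed exceptions) contains one of $bb\overline{bb}$, $b\overline{b}b\overline{b}$, $\overline{b}bb\overline{b}$, $\overline{b}bbb\overline{b}$; every factor of length $17$ contains $00111$ or $11000$; every factor of length $98$ contains $11011$. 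So the finite forbidden set $F$ you posit can only be extracted \emph{after} a suitable word is in hand, and morphisms with images of length up to roughly $30$, not $2$, must be handled before the final finite computer check is justified. Your proposal defers exactly this bookkeeping (you call it ``the main obstacle'') while supplying a candidate word that fails the simplest constant-morphism test, so the argument does not go through.
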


\begin{proof}
Let $\bf u$ be any infinite ternary $(7/4^+)$-free word, and consider the binary word $\bf w$ defined
by ${\bf w} = m({\bf u})$, where $m$ is the 246-uniform morphism given below.\\

\noindent $0\to$ {\tiny\rm 000110100110001110100101100011100101100111011010011100011010010110001110100110001110011010011100010110011101101001110001} \\
{\tiny\rm 101001100011100110100111000110100101100011101001100011100101100111011010011100011010011000111001101001110001011001110110100111 } \\
\noindent $1\to$ {\tiny\rm 000110100110001110100101100011100101100111011010011100010110011100011001011000111001011001110110100111000110100101100011} \\
{\tiny\rm 101001100011100101100111011010011100011010011000111001101001110001101001011000111010011000111001101001110001011001110110100111 } \\
\noindent $2\to$ {\tiny\rm 000110100101100011101001100011100101100111011010011100010110011100011001011000111001011001110110100111000110100101100011} \\
{\tiny \rm 101001100011100101100111011010011100011010011000111001101001110001011001110110100111000110100101100011101001100011100110100111 } \\

We use $a$ and $b$ to denote letters.  We will use the concept of generalized repetition threshold \cite{Ilie&Ochem&Shallit:2005}.
Recall that a word is said to be
$(e,n)$-free if it contains no factor of the form $x^f$ where $f \geq e$ and $|x| \geq n$.
We will need the following properties of $\bf w$.  
\begin{enumerate}[(a)]
 \item $\bf w$ is $(11/6^+,4)$-free. In particular, the only squares occurring in $\bf w$ are $00$, $11$, $0101$, $1010$, $010010$, $101101$, and $110110$.
 \item The only cubes occurring in $\bf w$ are $000$ and $111$. Every cube $bbb$ extends to the left to $\overline{bb}bbb$.
 \item $\bf w$ does not contain any of the following factors: $01010$, $10101$, $00100$, $1101100$, $1011010010$.
 \item Every factor of $\bf w$ of length 17 contains $00111$ or $11000$.
 \item Every factor of $\bf w$ of length 98 contains $11011$.
 \item Every factor of $\bf w$ of length at least 5, except $00010$, $11101$, $111011$, and $11011$,
 contains a factor of the form $bb\overline{bb}$, $b\overline{b}b\overline{b}$, $\overline{b}bb\overline{b}$, or $\overline{b}bbb\overline{b}$.
\end{enumerate}

By \cite[Lemma 2.1]{Ochem:2006}, it is sufficient to check the $(11/6^+,4)$-freeness for the image of every
$(7/4^+)$-free ternary word of length smaller than $\frac{2\times11/6}{11/6-7/4}=44$.
The other properties can be checked by inspecting factors of $w$ with bounded length.

The following cases show that $\bf w$ contains no factor of the form $xh(x)$ with $|x|\ge5$.
\begin{itemize}
 \item We can rule out $h(0)=h(1)$, as $h(x)$ contains $h(0)^5$, which contradicts (a).
 \item We can rule out $h(b)=b$, as $xh(x)=xx$ is a square with period at least $5$, which contradicts (a).
 \item We can rule out $|x|\ge17$: By (d), $x$ contains the factor $bb\overline{bbb}$.
 By (b), $|h(\overline{b})|=1$, say $h(\overline{b})=a$.
 By (a), a square of period at least two has either $a$ or $a\overline{a}$ as a suffix.
 So if $|h(b)|>1$, then $h(bb\overline{bbb})$ has either $aaaa$ or $a\overline{a}aaa$ as a suffix, which contradicts (b).
 Thus $|h(b)|=|h(\overline{b})|=1$.
 By the previous cases, the only remaining possibility is $h(b)=\overline{b}$.
 If $|x|\ge98$, then $x$ contains $11011$ by (e). Thus $h(x)$ contains $00100$, which contradicts (c).
 If $17\le|x|\le97$, then a computer check shows that $w$ contains no antisquare $xh(x)=x\overline{x}$.
 \item If $x=bbb\overline{b}b$, then $xh(x)$ contains the factor $\overline{b}bh(bbb)$, which contradicts (b).
 \item If $x=111011$ or $x=11011$, then $xh(x)$ contains the factor $11011h(11)$.
 We can check that every choice of $h(11)$ leads to a contradiction with (a), (b), or (c).
 \item We can rule out the remaining cases. By (f) and the previous two cases, we can assume that $x$
 contains $bb\overline{bb}$, $b\overline{b}b\overline{b}$, $\overline{b}bb\overline{b}$, or $\overline{b}bbb\overline{b}$.
 Since $b$ is always contained in a square, $|h(b)|\le3$ by (a).
 If $\overline{b}$ is contained in a square, then $|h(\overline{b})|\le 3$.
 Otherwise, $x$ contains $\overline{b}bb\overline{b}$, or $\overline{b}bbb\overline{b}$. Notice that $|h(bb)|\le6$ and $|h(bbb)|=3$.
 Let $s\in\{2,3\}$. The repetition $h(\overline{b}b^s\overline{b})$ in a $(11/6^+,4)$-free word implies that $|h(\overline{b}b^s\overline{b})|\le\tfrac{11}6|h(\overline{b}b^s)|$.
 This gives $|h(\overline{b})|\le5|h(b^s)|\le30$. Thus, $|h(0)|+|h(1)|\le 3+30=33$.
 So if $w$ contains a factor of the form $xh(x)$ with $|x|\ge5$, then $|x|\le16$ and $|h(0)|+|h(1)|\le33$.
 Finally, a computer check shows that $w$ contains no such factor $xh(x)$.
\end{itemize}
\end{proof}



\newcommand{\noopsort}[1]{} \newcommand{\singleletter}[1]{#1}

\end{document}